\documentclass[11pt, reqno]{amsart}
\usepackage{amsmath, amsthm, amscd, amsfonts, amssymb, graphicx, color}
\usepackage{bbm,mathtools}
\usepackage{booktabs}
\usepackage[bookmarksnumbered, colorlinks, plainpages]{hyperref}
\usepackage{enumitem}
\usepackage{mathrsfs}
\hypersetup{colorlinks=true,linkcolor=red, anchorcolor=green, citecolor=cyan, urlcolor=red, filecolor=magenta, pdftoolbar=true}

\newtheorem{theorem}{Theorem}[section]
\newtheorem{lemma}[theorem]{Lemma}

\newtheorem{corollary}[theorem]{Corollary}
\theoremstyle{definition}
\newtheorem{definition}[theorem]{Definition}

\theoremstyle{remark}
\newtheorem{remark}[theorem]{Remark}

\newcommand{\M}{\mathbb{M}}

\begin{document}

\title{Kubo-Ando Means and Rigidity of Quantum Positivity Cones}

\author{Mohsen Kian}
\address{Mohsen Kian:  Department of Mathematics, University of Bojnord, P. O. Box 1339, Bojnord 94531, Iran}
\email{kian@ub.ac.ir }
\subjclass[2020]{Primary 47A64, 47A63; Secondary 15A45,   81P40.}
\keywords{Kubo--Ando mean,  separable cone,   Schmidt number, entanglement-breaking map,  completely positive map}

 \begin{abstract}
We investigate the stability of quantum positivity cones under nonlinear operator means. Specifically, we examine how Kubo--Ando means interact with the separable, positive partial transpose (PPT), and Schmidt-number cones. By analyzing the curvature of operator monotone  functions at the identity, we give  a strict rigidity phenomenon: weighted arithmetic means are the only Kubo--Ando means that preserve the separable cone in all dimensions. We show  that the strictly positive curvature of any non-arithmetic mean explicitly forces a violation of the PPT condition, even in the foundational two-qubit setting, and can strictly increase the Schmidt number of the resulting operator. Finally, using  the Choi--Jamio{\l}kowski correspondence, we translate these geometric obstructions to the map-theoretic setting, concluding that convex mixing is the uniquely permissible Kubo--Ando operation for preserving entanglement-breaking quantum channels.
\end{abstract}

\maketitle

\section{Introduction}

Kubo--Ando means provide the standard axiomatic framework for binary
means of positive operators.  Introduced by Kubo and Ando
\cite{KuboAndo1980}, these means are characterized by natural order,
continuity, and transformer-inequality properties, and are equivalently
described by operator monotone representing functions.  Thus they form a
distinguished class of nonlinear operations on the positive cone.  Many
familiar means, such as the arithmetic, harmonic, geometric, and
logarithmic means, arise in this framework.

The purpose of this paper is to study how Kubo--Ando means interact with
cones coming from quantum information theory.  The relevant cones are not
only the full positive semidefinite cone, but also finer cones determined
by entanglement constraints.  The separable cone consists of positive
operators that are finite sums of tensor products of positive operators.
Its trace-one slice is the set of separable bipartite quantum states.
The PPT cone is defined by positivity under partial transposition, a
condition introduced by Peres \cite{Peres1996}.  The fact that PPT is
equivalent to separability in dimensions \(2\otimes2\) and \(2\otimes3\)
is the Peres--Horodecki criterion \cite{Horodecki1996}.  More generally,
the Schmidt-number cones, introduced by Terhal and Horodecki
\cite{TerhalHorodecki2000}, interpolate between the separable cone and
the full positive semidefinite cone.  Through the
Choi--Jamio{\l}kowski correspondence
\cite{Jamiolkowski1972,Choi1975}, these structures are closely connected
with completely positive maps and entanglement-breaking channels
\cite{HorodeckiShorRuskai2003}.

The interaction between operator means and positivity has been studied
extensively \cite{KMS}, but much less is known about their behavior on
entanglement-related cones.  In this paper we study how Kubo--Ando means
act on separable, PPT, and Schmidt-number cones.  Our main result shows
that weighted arithmetic means are distinguished by strong preservation
properties for these cones, whereas non-arithmetic means fail to preserve
them even in the two-qubit setting.  Via the Choi--Jamio{\l}kowski
correspondence, we also obtain parallel statements for Choi matrices and
entanglement-breaking maps.

\section{Preliminaries}
Let \(\M_m^+\) denote the cone of positive semidefinite \(m\times m\)
matrices. The separable cone in \(\M_m\otimes \M_n\) is
\[
\mathcal S_1^{m,n}
=
\left\{
\sum_{j=1}^r A_j\otimes B_j:
r<\infty,\ A_j\in \M_m^+,\ B_j\in \M_n^+
\right\}.
\]
Equivalently, \(\mathcal S_1^{m,n}\) is the conic hull of the rank-one
projections \(zz^*\) with \(z=x\otimes y\) a product vector:
\[
        \mathcal S_1^{m,n}
        =
        \operatorname{cone}
        \{
        (xx^*)\otimes(yy^*):
        x\in\mathbb C^m,\ y\in\mathbb C^n
        \}.
\]
 Its trace-one
slice is precisely the set of separable bipartite quantum states.
Let \(T_n:\M_n\to \M_n\) denote the transpose map with respect to the
standard basis. For \(X\in \M_m\otimes \M_n\), we write
\[
X^\Gamma=(\operatorname{id}_m\otimes T_n)(X)
\]
and call \(X^\Gamma\) the partial transpose of \(X\). The PPT cone is
defined by
\[
\mathcal {PPT}^{m,n}
=
\{X\in \M_m\otimes \M_n: X\geq 0 \text{ and } X^\Gamma\geq 0\}.
\]
Clearly, every separable operator is PPT, i.e.,  $\mathcal S_1^{m,n}\subseteq \mathcal {PPT}^{m,n}$.

For \(z\in \mathbb C^m\otimes \mathbb C^n\), let
\(\operatorname{SR}(z)\) denote the Schmidt rank of \(z\), i.e. the
least integer \(r\) such that
\[
z=\sum_{i=1}^r x_i\otimes y_i
\]
for some \(x_i\in\mathbb C^m\) and \(y_i\in\mathbb C^n\). For
\(1\leq k\leq \min(m,n)\), the Schmidt-number cone is defined by
\[
\mathcal S_k^{m,n}
=
\operatorname{cone}
\left\{
zz^*:
z\in\mathbb C^m\otimes\mathbb C^n,\
\operatorname{SR}(z)\leq k
\right\}.
\]
Equivalently, \(X\in\mathcal S_k^{m,n}\) if and only if
\[
X=\sum_{j=1}^r z_jz_j^*
\]
for finitely many vectors \(z_j\) satisfying
\(\operatorname{SR}(z_j)\leq k\). In particular,
\(\mathcal S_1^{m,n}\) is the separable cone, while
$
\mathcal S_{\min(m,n)}^{m,n}
$
is the whole positive semidefinite cone.
Thus the cones \(\mathcal S_k^{m,n}\) interpolate between the separable
cone and the full positive semidefinite cone.

We shall use the Choi--Jamio{\l}kowski correspondence in the following
form.  If \(\Phi:\M_m\to \M_n\) is a linear map, its Choi matrix is
\[
    C_\Phi
    =
    \sum_{i,j=1}^m E_{ij}\otimes \Phi(E_{ij})
    \in \M_m\otimes \M_n,
\]
where \(\{E_{ij}\}_{i,j=1}^m\) denotes the standard matrix units in
\(\M_m\).  The map \(\Phi\mapsto C_\Phi\) is a linear isomorphism between
linear maps \(\M_m\to \M_n\) and matrices in \(\M_m\otimes \M_n\).  Under
this correspondence, \(\Phi\) is completely positive if and only if
\(C_\Phi\geq 0\).  Moreover, \(\Phi\) is trace-preserving if and only if
\[
    \operatorname{Tr}_2(C_\Phi)=I_m,
\]
where \(\operatorname{Tr}_2\) denotes the partial trace over the second
tensor factor.  Finally, \(\Phi\) is entanglement-breaking precisely when
\(C_\Phi\) is separable, that is,
\[
    C_\Phi\in\mathcal S_1^{m,n}.
\]

\section{Main result}
In this section we establish the main rigidity result for Kubo--Ando
means with respect to separability and related entanglement cones.  We
begin by introducing the quantity that will play a central role in the
proofs.

\begin{definition}[Curvature of a Kubo--Ando mean]
Let \(\sigma\) be a Kubo--Ando mean with representing function \(f\).
We define the curvature of \(\sigma\) by
\[
        \kappa_\sigma=-f''(1),
\]
whenever \(f\) is twice differentiable at \(1\).
\end{definition}

If \(\sigma\) is a non-affine Kubo--Ando mean with representing
function \(f\), then
\[
    \kappa_\sigma := -f''(1) >0.
\]
For the arithmetic mean, one has \(\kappa_\sigma=0\). Noting that   \(f:(0,\infty)\to(0,\infty)\) is operator concave \cite{FMPS} and satisfies
\(f(1)=1\), we know      \(f\) is concave.  Hence
\[
    f''(1)\le 0,
\]
whenever the second derivative is taken in the ordinary sense.

Moreover, equality \(f''(1)=0\) can occur only when \(f\) is affine.
Indeed,  operator monotone function   \(f\) can be written in the form
\[
    f(t)=a+bt+\int_{(0,\infty)}
    \frac{(1+\lambda)t}{t+\lambda}\,\mathrm{d}\mu(\lambda),
\]
where \(a,b\ge 0\) and \(\mu\) is a positive finite measure.  Differentiating
twice gives
\[
    f''(t)
    =
    -2\int_{(0,\infty)}
    \frac{(1+\lambda)\lambda}{(t+\lambda)^3}\,\mathrm{d}\mu(\lambda),
\]
whence
\[
    f''(1)
    =
    -2\int_{(0,\infty)}
    \frac{\lambda}{(1+\lambda)^2}\,\mathrm{d}\mu(\lambda)
    \le 0.
\]
If \(f\) is not affine, then \(\mu\neq 0\), and since
\[
    \frac{\lambda}{(1+\lambda)^2}>0
    \qquad (\lambda>0),
\]
we get
\[
    f''(1)<0.
\]
Consequently,
\[
    \kappa_\sigma:=-f''(1)>0.
\]

\begin{remark}
For several standard Kubo--Ando means, the curvature is as follows:
\[
\begin{array}{c|c|c}
\text{Mean} & f(x) & \kappa_\sigma=-f''(1) \\ \hline
\text{weighted arithmetic} &
(1-\alpha)+\alpha x &
0 \\[0.4em]

\text{weighted geometric} &
x^\alpha &
\alpha(1-\alpha) \\[0.4em]

\text{weighted harmonic} &
\dfrac{x}{(1-\alpha)x+\alpha} &
2\lambda(1-\alpha) \\[0.8em]

\text{weighted logarithmic }
&
\dfrac{\alpha-1}{\alpha}\,
\dfrac{x^\alpha-1}{x^{\alpha-1}-1}
&
\dfrac{1-\alpha+\alpha^2}{6}
\\[1.5ex]
\text{dual weighted logarithmic }
&
\dfrac{\alpha}{\alpha-1}\,
\dfrac{x(x^{\alpha-1}-1)}{x^\alpha-1}
&
\dfrac{1+\alpha-\alpha^2}{6}
\end{array}
\]
Here \(0\leq\alpha\leq1\). Thus the weighted arithmetic means are exactly
the zero-curvature means, while every non-arithmetic example in the table
has strictly positive curvature.
\end{remark}

We first establish the fundamental two-qubit construction underlying the
main results.
\begin{theorem}\label{thm-main1}
Let \(\sigma\) be a Kubo--Ando mean with representing function \(f\). Then the following are equivalent:
\begin{enumerate}
   \item \(\mathcal S_1^{2,2}\) is closed under \(\sigma\).
    \item \(\sigma\) is a weighted arithmetic mean.
 \end{enumerate}
In particular,  if \(\sigma\) is not arithmetic, then there exist positive
definite operators $ A,B\in\mathcal S_1^{2,2}$ such that $ A\sigma B\notin\mathcal S_1^{2,2}$. Moreover, the operators \(A\) and \(B\) may be chosen from $    \operatorname{int}\mathcal S_1^{2,2}$.
 \end{theorem}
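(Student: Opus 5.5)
The implication (2)\(\Rightarrow\)(1) is immediate. If \(f(x)=(1-\alpha)+\alpha x\), then \(A\sigma B=(1-\alpha)A+\alpha B\), and \(\mathcal S_1^{2,2}\) is a convex cone. All the work is in (1)\(\Rightarrow\)(2). I will argue by contraposition: assuming \(\sigma\) is not arithmetic, so \(\kappa_\sigma>0\) by the discussion preceding the Remark, I will produce interior separable \(A,B\) with \(A\sigma B\notin\mathcal S_1^{2,2}\). The certificate of non-separability will be the flip witness \(W=F\) (the swap operator on \(\mathbb C^2\otimes\mathbb C^2\)). It satisfies \(\operatorname{Tr}(F\,(xx^*\otimes yy^*))=|\langle x,y\rangle|^2\ge0\), so \(\operatorname{Tr}(FZ)\ge0\) for every \(Z\in\mathcal S_1^{2,2}\). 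It also satisfies \(\operatorname{Tr}F=2\). It therefore suffices to make \(\operatorname{Tr}(F\,(A\sigma B))<0\).

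\textbf{Step 1 (second-order expansion).} I would write \(A\sigma B=A^{1/2}f(A^{-1/2}BA^{-1/2})A^{1/2}\) and put \(\alpha=f'(1)\). Take \(A=S+tX\) and \(B=S+tY\) with \(S>0\) fixed, and set \(C=A^{-1/2}BA^{-1/2}\). Using \(A^{1/2}(C-I)A^{1/2}=B-A\) exactly and Taylor-expanding \(f\) at \(I\), I expect
\[
A\sigma B=(1-\alpha)A+\alpha B-\frac{\kappa_\sigma}{2}\,t^2\,(Y-X)S^{-1}(Y-X)+O(t^3).
\]
This is the only analytic input. It uses that \(f\) is smooth near \(1\), which follows from the integral representation.

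\textbf{Step 2 (choice of base point and direction).} Choose \(S\) positive definite and separable with \(\operatorname{Tr}(FS)=0\). For example, take \(S\) to be a sum of \(xx^*\otimes x^\perp x^{\perp *}\) over the pairs \(x=e_1,e_2,e_1\pm e_2,e_1\pm ie_2\). The corresponding vectors \(x\otimes x^\perp\) span \(\mathbb C^4\), so \(S>0\). Then choose a Hermitian \(D\) in the real span of the face \(\{Z\in\mathcal S_1^{2,2}:\operatorname{Tr}(FZ)=0\}\), so that \(S+\tau D\in\mathcal S_1^{2,2}\) for \(|\tau|\) small, and such that
\[
q:=\operatorname{Tr}(F\,DS^{-1}D)>0.
\]
Now set \(X=ct\,I-\alpha D\) and \(Y=ct\,I+(1-\alpha)D\), so that \(Y-X=D\). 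Then \(A=(S-t\alpha D)+ct^2 I\) is a separable operator plus a positive multiple of \(I\), and \(I\in\operatorname{int}\mathcal S_1^{2,2}\). Hence \(A\) lies in the interior and is positive definite, and likewise \(B\). Since \(\operatorname{Tr}(FD)=0\), Step 1 gives
\[
\operatorname{Tr}(F\,(A\sigma B))=t^2\Bigl(2c-\tfrac{\kappa_\sigma}{2}q\Bigr)+O(t^3).
\]
This is negative once \(c<\kappa_\sigma q/4\) and \(t\) is small, so \(A\sigma B\notin\mathcal S_1^{2,2}\). The construction covers every non-arithmetic \(\sigma\), including the endpoint cases \(\alpha\in\{0,1\}\).

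\textbf{Main obstacle.} The delicate point is Step 2: exhibiting an explicit \(D\) tangent to the zero face of \(F\) with \(\operatorname{Tr}(F\,DS^{-1}D)>0\). The expression \(DS^{-1}D\) generally leaves that face, which is what makes this plausible, but it has to be checked. I would first take \(S\) maximally symmetric, for instance the \(U\otimes\bar U\)-twirl–type sum above, so that \(S^{-1}\) is computable. I would then try \(D=e_1e_1^*\otimes e_2e_2^*-e_2e_2^*\otimes e_1e_1^*\), or a difference of two of the product terms \(xx^*\otimes x^\perp x^{\perp*}\), and compute \(q\) directly. If that gives \(q\le0\), I would fall back to a cross term such as \((e_1+e_2)(e_1+e_2)^*\otimes(e_1-e_2)(e_1-e_2)^*-e_1e_1^*\otimes e_2e_2^*\). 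I would also check the sign of \(q\) as \(D\) varies over the face span, to guarantee that at least one choice yields \(q>0\).
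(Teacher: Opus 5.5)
Your framework is sound: the flip operator $F$ is a valid witness. The expansion in Step~1 is correct. With $X=ct\,I-\alpha D$ and $Y=ct\,I+(1-\alpha)D$ one indeed has $(1-\alpha)A+\alpha B=S+ct^2I$, hence $\operatorname{Tr}(F(A\sigma B))=t^2(2c-\tfrac{\kappa_\sigma}{2}q)+O(t^3)$.

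\textbf{The gap.} The argument rests entirely on the existence of an admissible $D$ with $q=\operatorname{Tr}(F\,DS^{-1}D)>0$, and you only plan to look for one. This is not automatic. Normalize your six vectors, so that $S=P_s+3P_a$, where $P_s=uu^*+vv^*+ss^*$ and $P_a=aa^*$ are the projections onto the symmetric and antisymmetric subspaces. Then $S^{-1}=P_s+\tfrac13P_a$ commutes with $F=P_s-P_a$, and a block computation gives
\[
q=\|P_sDP_s\|_2^2-\tfrac23\|P_sDP_a\|_2^2-\tfrac13\langle a,Da\rangle^2 .
\]
Your first candidate, $e_1e_1^*\otimes e_2e_2^*-e_2e_2^*\otimes e_1e_1^*=sa^*+as^*$, is purely off-diagonal. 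So $q=-\tfrac23<0$, and the argument fails for it.

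Now take $D=P(x_1)-P(x_2)$ with $P(x)=xx^*\otimes x^\perp x^{\perp*}$, for unit vectors $x_i$ with Bloch vectors $r_i$. One finds $q=\tfrac16(1-r_1\cdot r_2)(1+3\,r_1\cdot r_2)$, so you need $r_1\cdot r_2>-\tfrac13$. For example, $x_1=e_1$ and $x_2=(e_1+e_2)/\sqrt2$ give $q=\tfrac16$, which confirms your cross-term fallback. Until such a computation is actually carried out, the proof is not complete.

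A smaller point: your phrase ``$D$ in the real span of the face, so that $S+\tau D\in\mathcal S_1^{2,2}$'' needs $S$ to lie in the relative interior of that face, which you do not argue. Taking $D$ to be a difference of terms of $S$, as above, makes $S\pm\tau D$ visibly a nonnegative combination of product operators for small $\tau$.

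\textbf{Comparison with the paper.} Once filled in, your proof is the paper's argument written in witness language.
\begin{enumerate}
\item The paper's eigenvalue $\lambda_3=\alpha+\frac{\beta-\gamma}{2}$ equals $\langle\Phi^+,X^\Gamma\Phi^+\rangle=\tfrac12\operatorname{Tr}(FX)$, where $\Phi^+=(e_1\otimes e_1+e_2\otimes e_2)/\sqrt2$.
\item Its base point $\tfrac12(P_s+3P_a)$ is your symmetric $S$ up to scaling.
\item Its direction $uu^*+vv^*-ss^*+aa^*$ lies in the span of the face. Relative to that base point it has $q=\tfrac{16}{3}$, and $\tfrac12\cdot\tfrac{\kappa_\sigma}{2}\cdot\tfrac{16}{3}=\tfrac43\kappa_\sigma$ is exactly the paper's constant.
\end{enumerate}
What each route buys: the paper keeps everything diagonal in $\{u,v,s,a\}$, so the mean is computed exactly by scalar functional calculus, and it certifies separability of the inputs via Peres--Horodecki. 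Your version needs the noncommutative second-order expansion, but it accommodates non-commuting perturbations and certifies separability of the inputs directly through the face.
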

\begin{proof}
If \(\sigma\) is a weighted arithmetic mean, say
\[
        A\sigma B=(1-t)A+tB
        \qquad (0\leq t\leq1),
\]
then \(\mathcal S_1^{2,2}\) is closed under \(\sigma\), since
\(\mathcal S_1^{2,2}\) is a convex cone.

Conversely, assume that \(\sigma\) is not arithmetic. Then
\[
        \kappa_\sigma=-f''(1)>0.
\]
We construct two separable two-qubit operators whose mean is not PPT.
Let $\{e_1,e_2\}$ be the standard basis of $\mathbb{C}^2$. Put
\[
        u=e_1\otimes e_1,\qquad
        v=e_2\otimes e_2,
\]
and
\[
        s=\frac{e_1\otimes e_2+e_2\otimes e_1}{\sqrt2},
        \qquad
        a=\frac{e_1\otimes e_2-e_2\otimes e_1}{\sqrt2}.
\]
so that $\{u,v,s,a\}$ is an  orthonormal basis of \(\mathbb C^2\otimes\mathbb C^2\).
For \(0<\varepsilon<1\), define
\[
\begin{aligned}
        A_\varepsilon
        &=
        \frac{1+\varepsilon}{2}(uu^*+vv^*)
        +
        \frac{1-\varepsilon}{2}ss^*
        +
        \frac{3+\varepsilon}{2}aa^*,
        \\[0.4em]
        B_\varepsilon
        &=
        \frac{1-\varepsilon}{2}(uu^*+vv^*)
        +
        \frac{1+\varepsilon}{2}ss^*
        +
        \frac{3-\varepsilon}{2}aa^* .
\end{aligned}
\]
Clearly,  \(A_\varepsilon\) and \(B_\varepsilon\) are positive definite. They both are of the form
\[
        X=\alpha(uu^*+vv^*)+\beta ss^*+\gamma aa^*,
\]
for which the partial transpose $X^\Gamma = (\operatorname{id}_2\otimes T)(X)$
  can be represented in the standard computational basis as the block-diagonal matrix:
\[
        X^\Gamma = \begin{pmatrix} \alpha & 0 & 0 & \frac{\beta-\gamma}{2} \\ 0 & \frac{\beta+\gamma}{2} & 0 & 0 \\ 0 & 0 & \frac{\beta+\gamma}{2} & 0 \\ \frac{\beta-\gamma}{2} & 0 & 0 & \alpha \end{pmatrix}.
\]
Hence the eigenvalues of \(X^\Gamma\) are
\[
        \frac{\beta+\gamma}{2},\quad
        \frac{\beta+\gamma}{2},\quad
        \alpha+\frac{\beta-\gamma}{2},\quad
        \alpha-\frac{\beta-\gamma}{2}.
\]

For the operator $A_\varepsilon$, we have $\alpha = \frac{1+\varepsilon}{2}$, $\beta = \frac{1-\varepsilon}{2}$, and $\gamma = \frac{3+\varepsilon}{2}$. Substituting these into the eigenvalue formulas gives us:
\begin{align*}
    \lambda_{1,2} &= \frac{\beta+\gamma}{2} = 1 > 0, \\
    \lambda_3 &= \alpha + \frac{\beta-\gamma}{2} = \frac{1+\varepsilon}{2} + \frac{-1-\varepsilon}{2} = 0, \\
    \lambda_4 &= \alpha - \frac{\beta-\gamma}{2} = \frac{1+\varepsilon}{2} - \frac{-1-\varepsilon}{2} = 1+\varepsilon > 0.
\end{align*}
Thus, all eigenvalues of $A_\varepsilon^\Gamma$ are non-negative.

Similarly, for $B_\varepsilon$, we have $\alpha = \frac{1-\varepsilon}{2}$, $\beta = \frac{1+\varepsilon}{2}$, and $\gamma = \frac{3-\varepsilon}{2}$. The eigenvalues are:
\begin{align*}
    \lambda_{1,2} &= \frac{\beta+\gamma}{2} = 1 > 0, \\
    \lambda_3 &= \alpha + \frac{\beta-\gamma}{2} = \frac{1-\varepsilon}{2} + \frac{-1+\varepsilon}{2} = 0, \\
    \lambda_4 &= \alpha - \frac{\beta-\gamma}{2} = \frac{1-\varepsilon}{2} - \frac{-1+\varepsilon}{2} = 1-\varepsilon > 0.
\end{align*}
Since $0 < \varepsilon < 1$, all eigenvalues of $B_\varepsilon^\Gamma$ are also non-negative.

Since both $A_\varepsilon$ and $B_\varepsilon$ are positive definite and their partial transposes are positive semidefinite, we have
\[
        A_\varepsilon,B_\varepsilon \in \mathcal {PPT}^{2,2}.
\]
It follows from  the Peres--Horodecki criterion \cite{Horodecki1996,Peres1996} in dimension \(2\otimes2\) that $\mathcal {PPT}^{2,2} = \mathcal S_1^{2,2}$. Consequently,
\[
        A_\varepsilon,B_\varepsilon \in \mathcal S_1^{2,2}.
\]

Since \(A_\varepsilon\) and \(B_\varepsilon\) are diagonal in the same
orthonormal basis, their Kubo--Ando mean is also diagonal in this basis:
\[
        A_\varepsilon\sigma B_\varepsilon
        =
        \alpha_\varepsilon(uu^*+vv^*)
        +
        \beta_\varepsilon ss^*
        +
        \gamma_\varepsilon aa^*,
\]
where
 \begin{align}\label{const}
        \alpha_\varepsilon
       =
        \frac{1+\varepsilon}{2}
        f\!\left(\frac{1-\varepsilon}{1+\varepsilon}\right),
        \quad
        \beta_\varepsilon
        =
        \frac{1-\varepsilon}{2}
        f\!\left(\frac{1+\varepsilon}{1-\varepsilon}\right),
        \quad
        \gamma_\varepsilon
        =
        \frac{3+\varepsilon}{2}
        f\!\left(\frac{3-\varepsilon}{3+\varepsilon}\right).
\end{align}

Clearly, $A_\varepsilon\sigma B_\varepsilon$ is positive definite. We show that its partial transpose has a negative eigenvalue.

Because $A_\varepsilon\sigma B_\varepsilon$ is exactly of the form of $A_\varepsilon$ and $B_\varepsilon$, the spectrum of its partial transpose can be computed  by the same eigenvalue formulas. Specifically, one of the eigenvalues of $(\operatorname{id}_2\otimes T)(A_\varepsilon\sigma B_\varepsilon)$ is:
\[
        \lambda_3 = \alpha_\varepsilon + \frac{\beta_\varepsilon}{2} - \frac{\gamma_\varepsilon}{2}.
\]
Using \eqref{const} and applying  the Taylor expansion
\[
        f(1+h)
        =
        1+f'(1)h-\frac{\kappa_\sigma}{2}h^2+O(h^3),
\]
a direct computation yields
\[
        \lambda_3
        =
        \alpha_\varepsilon + \frac{\beta_\varepsilon}{2} - \frac{\gamma_\varepsilon}{2}
        =
        -\frac43\kappa_\sigma\varepsilon^2 + O(\varepsilon^3),
\]
where $\kappa_\sigma$ is the curvature of $\sigma$.  Since \(\kappa_\sigma>0\), this eigenvalue is strictly negative for all sufficiently small \(\varepsilon>0\). Therefore, the partial transpose of $A_\varepsilon\sigma B_\varepsilon$ has a negative eigenvalue, which means
\[
        (\operatorname{id}_2\otimes T)(A_\varepsilon\sigma B_\varepsilon) \ngeq 0,
\]
and hence
\[
        A_\varepsilon\sigma B_\varepsilon \notin \mathcal S_1^{2,2}.
\]

Now, we prove that the input operators  can be chosen from the interior of   separable cone. Chose
\[
        0<c<\frac43\kappa_\sigma
\]
and set
\[
        A_{\varepsilon,c}=A_\varepsilon+c\varepsilon^2 I,
        \qquad
        B_{\varepsilon,c}=B_\varepsilon+c\varepsilon^2 I.
\]
Then $A_{\varepsilon,c}$ and $B_{\varepsilon,c}$ are  positive definite, and their partial transposes satisfy
\[
        (\operatorname{id}_2\otimes T)(A_{\varepsilon,c}) = (\operatorname{id}_2\otimes T)(A_\varepsilon) + c\varepsilon^2 I > 0,
\]
and
\[
        (\operatorname{id}_2\otimes T)(B_{\varepsilon,c}) = (\operatorname{id}_2\otimes T)(B_\varepsilon) + c\varepsilon^2 I > 0.
\]
Thus
\[
        A_{\varepsilon,c},B_{\varepsilon,c}
        \in
        \operatorname{int}\mathcal {PPT}^{2,2}
        =
        \operatorname{int}\mathcal S_1^{2,2}.
\]
Since $A_{\varepsilon,c}$ and $B_{\varepsilon,c}$ are diagonal in the same orthonormal basis $\{u, v, s, a\}$ as before, their Kubo--Ando mean $A_{\varepsilon,c}\sigma B_{\varepsilon,c}$ is also diagonal in this basis. Moreover, we have
\[
        A_{\varepsilon,c}\sigma B_{\varepsilon,c}
        =
        A_\varepsilon\sigma B_\varepsilon + c\varepsilon^2 I + O(\varepsilon^4).
\]
Because adding $c\varepsilon^2 I$ simply shifts the spectrum of the partial transpose by $c\varepsilon^2$, the corresponding smallest eigenvalue of $(\operatorname{id}_2\otimes T)(A_{\varepsilon,c}\sigma B_{\varepsilon,c})$ becomes:
\begin{align*}
        \tilde{\lambda}_3
        &=
        \lambda_3 + c\varepsilon^2 + O(\varepsilon^4) \\
        &=
        \left(c-\frac43\kappa_\sigma\right)\varepsilon^2
        +
        O(\varepsilon^3).
\end{align*}
By the choice of \(c\), this eigenvalue is strictly negative for all sufficiently small \(\varepsilon>0\). Therefore, the partial transpose has a negative eigenvalue, which means
\[
        A_{\varepsilon,c}\sigma B_{\varepsilon,c}
        \notin
        \mathcal {PPT}^{2,2} = \mathcal S_1^{2,2}.
\]
This proves the converse implication and completes the proof.
\end{proof}

We now extend the preceding two-qubit obstruction to arbitrary
dimensions.

\begin{theorem}
\label{thm-main2}
Let \(m,n\geq2\), and let \(\sigma\) be a Kubo--Ando mean. Then the
following are equivalent:
\begin{enumerate}
     \item \(\mathcal S_1^{m,n}\) is closed under \(\sigma\).
    \item \(\sigma\) is a weighted arithmetic mean.
   \end{enumerate}
\end{theorem}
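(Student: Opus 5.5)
The direction (2)$\Rightarrow$(1) is immediate: a weighted arithmetic mean $A\sigma B=(1-t)A+tB$ is a convex combination, and $\mathcal S_1^{m,n}$ is a convex cone. For (1)$\Rightarrow$(2) I will argue by contraposition and reduce to Theorem~\ref{thm-main1} by embedding the two-qubit counterexample into $\mathbb C^m\otimes\mathbb C^n$. Assume $\sigma$ is not arithmetic. Theorem~\ref{thm-main1} then provides $A,B\in\mathcal S_1^{2,2}$ with $A\sigma B\notin\mathcal S_1^{2,2}$. Let $V:\mathbb C^2\to\mathbb C^m$ and $W:\mathbb C^2\to\mathbb C^n$ be the isometries onto the span of the first two basis vectors, and set $J=V\otimes W$. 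Define $\tilde A=JAJ^*$ and $\tilde B=JBJ^*$.

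First I will check that $\tilde A,\tilde B\in\mathcal S_1^{m,n}$. If $A=\sum_j A_j\otimes B_j$ with $A_j,B_j\ge0$, then $\tilde A=\sum_j (VA_jV^*)\otimes(WB_jW^*)$, and each factor is still positive semidefinite. The same holds for $\tilde B$.

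Next I compute $\tilde A\sigma\tilde B$. The operators $\tilde A,\tilde B$ are not invertible, so I use the block structure with respect to $\mathbb C^m\otimes\mathbb C^n=\operatorname{ran}J\oplus(\operatorname{ran}J)^\perp$, in which $\tilde A=A\oplus 0$ and $\tilde B=B\oplus0$. The Kubo--Ando mean is defined for singular operators by the limit $(\tilde A+\delta I)\sigma(\tilde B+\delta I)$ as $\delta\downarrow0$. For each $\delta$ this equals $\big((A+\delta I)\sigma(B+\delta I)\big)\oplus\big(\delta I\,\sigma\,\delta I\big)$, and the second summand is $\delta f(1)I=\delta I$. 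Letting $\delta\to0$ gives $\tilde A\sigma\tilde B=(A\sigma B)\oplus0=J(A\sigma B)J^*$. Here $A,B$ were chosen positive definite, so $(A+\delta I)\sigma(B+\delta I)\to A\sigma B$ by continuity.

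Finally I show $J(A\sigma B)J^*\notin\mathcal S_1^{m,n}$. Suppose instead that $J(A\sigma B)J^*=\sum_j X_j\otimes Y_j$ with $X_j,Y_j\ge0$. Compressing by $J^*(\cdot)J=(V^*\otimes W^*)(\cdot)(V\otimes W)$ gives
\[
A\sigma B=J^*J(A\sigma B)J^*J=\sum_j (V^*X_jV)\otimes(W^*Y_jW)\in\mathcal S_1^{2,2},
\]
which is a contradiction. In other words, the local compression $J^*(\cdot)J$ maps $\mathcal S_1^{m,n}$ into $\mathcal S_1^{2,2}$.

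The step needing most care is the singular-operator computation of the mean. One could avoid it by perturbing the embedding instead: use $\tilde A=JAJ^*+\eta(I-JJ^*)$. Then $\tilde A\sigma\tilde B=(A\sigma B)\oplus\eta I$, and compressing still recovers $A\sigma B$. This also gives positive definite inputs, but membership of $\tilde A$ in $\mathcal S_1^{m,n}$ must then be checked. That holds because $I-JJ^*$ is a sum of projections onto product basis vectors $e_i\otimes e_k$ with $i>2$ or $k>2$, so it is separable.
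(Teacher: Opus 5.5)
Your proof is correct and follows the paper's route: embed the two-qubit counterexample of Theorem~\ref{thm-main1} with the local isometry $V\otimes W$, then pull a putative separable decomposition back with the local compression $(V^*\otimes W^*)(\cdot)(V\otimes W)$. Your block-diagonal $\delta$-regularization, which justifies $\tilde A\sigma\tilde B=J(A\sigma B)J^*$ for the singular embedded operators, usefully fills in a step the paper leaves to ``congruence invariance'', a property that is stated in general only for invertible congruences.
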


\begin{proof}
 If \(\sigma\) is a weighted arithmetic mean, then    \(\mathcal S_1^{m,n}\) is closed  under \(\sigma\), because it is a convex cone.

Conversely, suppose that \(\sigma\) is not arithmetic. By
Theorem~\ref{thm-main1}, there exist
$     A,B\in\mathcal S_1^{2,2}$
such that $ A\sigma B\notin\mathcal S_1^{2,2}$.

Suppose that
\[
        V:\mathbb C^2\to\mathbb C^m,
        \qquad
        W:\mathbb C^2\to\mathbb C^n
\]
are  the canonical isometries onto the spans of the first two standard basis
vectors. We define
\[
        \widetilde A=(V\otimes W)A(V\otimes W)^*,
        \qquad
        \widetilde B=(V\otimes W)B(V\otimes W)^*.
\]
Since local isometries preserve separability,
\[
        \widetilde A,\widetilde B\in\mathcal S_1^{m,n}.
\]
On the other hand, the congruence invariance of Kubo--Ando implies that,
\begin{align}\label{atilde}
        \widetilde A\sigma\widetilde B
        =
        (V\otimes W)(A\sigma B)(V\otimes W)^*.
\end{align}

Assume, for contradiction, that
\[
        \widetilde A\sigma\widetilde B
        \in\mathcal S_1^{m,n}.
\]
Since $V\otimes W$ is an  isometry, it follows form \eqref{atilde} that
\[
        A\sigma B
        =
        (V^*\otimes W^*)
        (\widetilde A\sigma\widetilde B)
        (V\otimes W)
        \in\mathcal S_1^{2,2},
\]
which contradicts the choice of \(A\) and \(B\). Hence
\[
        \widetilde A\sigma\widetilde B
        \notin\mathcal S_1^{m,n}.
\]
Therefore \(\mathcal S_1^{m,n}\) is not closed under \(\sigma\).
\end{proof}

We next show that the failure of preservation extends beyond
separability.  More precisely, non-arithmetic Kubo--Ando means can
increase Schmidt number and therefore fail to preserve the cones
\(\mathcal S_r^{m,n}\).
\begin{theorem}\label{thm-main3}
Let \(m,n\geq2\), and let \(\sigma\) be a Kubo--Ando mean. Let
\(\mathcal C\subseteq \M_m\otimes \M_n\) be a convex cone satisfying
\[
        \mathcal S_1^{m,n}
        \subseteq
        \mathcal C
        \subseteq
        \mathcal {PPT}^{m,n}.
\]
Then the following are equivalent:
\begin{enumerate}
\item \(\mathcal C\) is closed under \(\sigma\).
    \item \(\sigma\) is a weighted arithmetic mean.
    \end{enumerate}
\end{theorem}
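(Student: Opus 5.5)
The plan is to sandwich the cone between $\mathcal S_1^{m,n}$ and $\mathcal{PPT}^{m,n}$ and reuse the two-qubit construction of Theorem~\ref{thm-main1}. That construction produces separable inputs whose mean is not even PPT, so it rules out every intermediate cone at once. The implication (2)$\Rightarrow$(1) is immediate. If $A\sigma B=(1-t)A+tB$ with $0\le t\le 1$, then $A\sigma B$ is a nonnegative combination of elements of $\mathcal C$, hence lies in $\mathcal C$ because $\mathcal C$ is a convex cone.

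For (1)$\Rightarrow$(2) I argue by contraposition. Suppose $\sigma$ is not arithmetic, so $\kappa_\sigma>0$. The proof of Theorem~\ref{thm-main1} gives positive definite $A_\varepsilon,B_\varepsilon\in\mathcal S_1^{2,2}$ such that $(A_\varepsilon\sigma B_\varepsilon)^\Gamma$ has the negative eigenvalue $-\tfrac43\kappa_\sigma\varepsilon^2+O(\varepsilon^3)$ for small $\varepsilon$. In other words, $A_\varepsilon\sigma B_\varepsilon\notin\mathcal{PPT}^{2,2}$, which is strictly stronger than non-separability. As in Theorem~\ref{thm-main2}, I embed these via the canonical isometries $V:\mathbb C^2\to\mathbb C^m$ and $W:\mathbb C^2\to\mathbb C^n$, setting $\widetilde A=(V\otimes W)A_\varepsilon(V\otimes W)^*$ and similarly $\widetilde B$. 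Local isometries preserve separability, so $\widetilde A,\widetilde B\in\mathcal S_1^{m,n}\subseteq\mathcal C$.

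It remains to show that $\widetilde A\sigma\widetilde B\notin\mathcal{PPT}^{m,n}$, which gives $\widetilde A\sigma\widetilde B\notin\mathcal C$. The embedded operators are not invertible, so I will not rely on a bare congruence-invariance formula. Instead I compute directly: $\widetilde A$ and $\widetilde B$ are block-diagonal, equal to $A_\varepsilon$ and $B_\varepsilon$ on $\mathcal K=\mathbb C^2\otimes\mathbb C^2$ and zero on $\mathcal K^\perp$. Using the decreasing-limit definition $\widetilde A\sigma\widetilde B=\lim_{\delta\downarrow0}(\widetilde A+\delta I)\sigma(\widetilde B+\delta I)$, the mean is $A_\varepsilon\sigma B_\varepsilon\oplus 0$, since $\delta\,\sigma\,\delta=\delta\to0$ on the complement.

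Next, partial transposition with respect to the standard basis commutes with the embedding, because $W$ maps standard basis vectors to standard basis vectors and so $T_n(WYW^*)=WT_2(Y)W^*$. Hence $(\widetilde A\sigma\widetilde B)^\Gamma=(V\otimes W)(A_\varepsilon\sigma B_\varepsilon)^\Gamma(V\otimes W)^*$. The right-hand side is unitarily a direct sum with $0$, so it retains the negative eigenvalue. Therefore $\widetilde A\sigma\widetilde B\notin\mathcal{PPT}^{m,n}\supseteq\mathcal C$, and $\mathcal C$ is not closed under $\sigma$.

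I expect the main obstacle to be exactly this step: checking that the negative eigenvalue of the partial transpose survives the embedding. It needs two facts. The first is that the mean of the block-embedded operators is the embedded mean, which I handle with the limit definition above. The second is that partial transpose commutes with the canonical local isometries. Both are routine, but they must be stated carefully; the PPT failure itself, rather than mere non-separability, is what lets the argument cover every cone between $\mathcal S_1^{m,n}$ and $\mathcal{PPT}^{m,n}$.
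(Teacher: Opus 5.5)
Your proposal is correct and follows essentially the same route as the paper: you take the two-qubit pair from Theorem~\ref{thm-main1} whose mean fails PPT, embed it by the canonical local isometries, and conclude from $\mathcal S_1^{m,n}\subseteq\mathcal C\subseteq\mathcal{PPT}^{m,n}$. The only difference is that you explicitly justify the embedding step. You compute the mean of $A_\varepsilon\oplus 0$ and $B_\varepsilon\oplus 0$ via the decreasing-limit definition, and you note that partial transposition commutes with the real canonical isometries; the paper covers both points by a brief appeal to congruence invariance.
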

\begin{proof}
Trivially, every convex cone is closed under the  weighted arithmetic mean.

Conversely, suppose that \(\sigma\) is not arithmetic. By
Theorem~\ref{thm-main1}, there exist
$  A,B\in\operatorname{int}\mathcal S_1^{2,2}$
such that
\[
        (\operatorname{id}_2\otimes T)(A\sigma B)\ngeq0.
\]
By embedding this two-qubit construction into
        $\mathbb C^m\otimes\mathbb C^n$
 by local isometries, as in the proof of
Theorem~\ref{thm-main2}, we obtain
operators
        $\widetilde A,\widetilde B\in\mathcal S_1^{m,n}$
such that
\[
        (\operatorname{id}_m\otimes T)(\widetilde A\sigma\widetilde B)
        \ngeq0.
\]
Hence
\[
        \widetilde A\sigma\widetilde B
        \notin
        \mathcal {PPT}^{m,n}.
\]
Since
\[
        \mathcal C\subseteq \mathcal {PPT}^{m,n},
\]
we have
\[
        \widetilde A\sigma\widetilde B\notin\mathcal C.
\]
On the other hand,
\[
        \widetilde A,\widetilde B\in\mathcal S_1^{m,n}
        \subseteq \mathcal C.
\]
Thus \(\mathcal C\) is not closed under \(\sigma\).
\end{proof}

This structural limitation is particularly relevant when considering concrete geometric families that interpolate between the separable and PPT boundaries. For instance, a natural class of candidates for such intermediate strata is given by the families of nested entanglement cones arising from rank-constrained $C^*$-convex hulls, see \cite{kian-rc2025}. Theorem~\ref{thm-main3} implies that despite the rich facial and algebraic structure possessed by the cones within this hierarchy, none of them can maintain closure under a non-affine Kubo–Ando operation.

\begin{lemma} \label{lem-schmidt-number}
Let $\psi$ be a vector in a tensor product space with Schmidt rank $r$, and let $Y \geq 0$ be a non-zero positive semidefinite operator on a tensor product space. Then
\[
        \operatorname{SN}(\psi\psi^*\otimes Y)
        =
        r\,\operatorname{SN}(Y).
\]
\end{lemma}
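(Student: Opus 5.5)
The argument compares the two sides through their decompositions into rank-one terms. Interpret $\psi\in\mathbb C^{m_1}\otimes\mathbb C^{n_1}$ and $Y$ acting on $\mathbb C^{m_2}\otimes\mathbb C^{n_2}$. Regard $\psi\psi^*\otimes Y$ as an operator on $(\mathbb C^{m_1}\otimes\mathbb C^{m_2})\otimes(\mathbb C^{n_1}\otimes\mathbb C^{n_2})$ via the canonical reshuffling of factors; this is the bipartition in which $\operatorname{SN}$ is taken. Write $\operatorname{SN}(X)$ for the least $k$ with $X\in\mathcal S_k$. Normalize $\|\psi\|=1$; scaling does not change either side.

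The first ingredient is multiplicativity of Schmidt rank under this reshuffled tensor product: for $\psi$ and $z\in\mathbb C^{m_2}\otimes\mathbb C^{n_2}$,
\[
\operatorname{SR}(\psi\otimes z)=\operatorname{SR}(\psi)\operatorname{SR}(z).
\]
To see this, identify a vector in $\mathbb C^{p}\otimes\mathbb C^{q}$ with its $p\times q$ coefficient matrix, whose rank is the Schmidt rank. Under the reshuffling, the coefficient matrix of $\psi\otimes z$ is the Kronecker product of the coefficient matrices $M_\psi$ and $M_z$. Since $\operatorname{rank}(M_\psi\otimes M_z)=\operatorname{rank}M_\psi\cdot\operatorname{rank}M_z$, the identity follows.

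\emph{Upper bound.} Let $k=\operatorname{SN}(Y)$ and write $Y=\sum_j z_jz_j^*$ with $\operatorname{SR}(z_j)\le k$. Then
\[
\psi\psi^*\otimes Y=\sum_j(\psi\otimes z_j)(\psi\otimes z_j)^*,
\]
and each $\psi\otimes z_j$ has Schmidt rank at most $rk$ by multiplicativity. Hence $\operatorname{SN}(\psi\psi^*\otimes Y)\le r\operatorname{SN}(Y)$.

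\emph{Lower bound.} Suppose $\psi\psi^*\otimes Y=\sum_i w_iw_i^*$ with $\operatorname{SR}(w_i)\le s$. Each $w_i$ lies in the range of $\psi\psi^*\otimes Y$, because a sum of positive rank-one operators dominates each summand. That range is $\psi\otimes\operatorname{ran}(Y)$, so $w_i=\psi\otimes y_i$ for some vector $y_i$. Applying the partial functional $\psi^*\otimes I$ on both sides gives $Y=\sum_i y_iy_i^*$, using $\|\psi\|=1$. Multiplicativity gives $r\operatorname{SR}(y_i)=\operatorname{SR}(w_i)\le s$, so $\operatorname{SR}(y_i)\le s/r$. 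Hence $\operatorname{SN}(Y)\le s/r$, and taking $s=\operatorname{SN}(\psi\psi^*\otimes Y)$ yields $r\operatorname{SN}(Y)\le\operatorname{SN}(\psi\psi^*\otimes Y)$.

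The step I expect to need the most care is this lower bound. One must verify rigorously that every vector in a rank-one decomposition of $\psi\psi^*\otimes Y$ has product form $\psi\otimes y_i$; the range-containment argument above handles this. One must also keep the factor reshuffling consistent, so that the coefficient matrix really is a Kronecker product and the rank identity applies.
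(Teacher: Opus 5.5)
Your proof is correct and matches the paper's argument step for step: the upper bound comes from tensoring a decomposition of $Y$ with $\psi$, and the lower bound comes from noting that each vector in a rank-one decomposition of $\psi\psi^*\otimes Y$ lies in its range $\psi\otimes\operatorname{ran}(Y)$, then applying multiplicativity of Schmidt rank. Your Kronecker-rank justification of $\operatorname{SR}(\psi\otimes z)=\operatorname{SR}(\psi)\operatorname{SR}(z)$ and your domination argument for range containment fill in steps the paper only asserts; one small correction is that recovering $Y$ needs the two-sided compression $(\psi^*\otimes I)(\cdot)(\psi\otimes I)$, not a one-sided functional.
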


\begin{proof}
Let $k = \operatorname{SN}(Y)$. By definition, there exists a decomposition $Y = \sum_j \phi_j \phi_j^*$ where each vector $\phi_j$ has Schmidt rank $\operatorname{SR}(\phi_j) \leq k$. Tensoring with $\psi\psi^*$ yields the decomposition
\[
        \psi\psi^* \otimes Y = \sum_j (\psi \otimes \phi_j)(\psi \otimes \phi_j)^*.
\]
Under the bipartite grouping
\[
(\mathcal H_1\otimes\mathcal H_2)\otimes
(\mathcal K_1\otimes\mathcal K_2)
\cong
(\mathcal H_1\otimes\mathcal K_1)\otimes
(\mathcal H_2\otimes\mathcal K_2),
\]
Schmidt rank is multiplicative on simple tensor products. Therefore, $\operatorname{SR}(\psi \otimes \phi_j) = \operatorname{SR}(\psi)\operatorname{SR}(\phi_j) \leq rk$. This shows that $\operatorname{SN}(\psi\psi^* \otimes Y) \leq r \operatorname{SN}(Y)$.

Conversely,   assume that $\operatorname{SN}(\psi\psi^* \otimes Y)=s$ and consider an optimal decomposition
\[
        \psi\psi^* \otimes Y = \sum_j \xi_j \xi_j^*,
\]
where $\operatorname{SR}(\xi_j) \leq s$ for all $j$. Because the range of this operator is contained entirely within the subspace spanned by $\psi$ in the first tensor factor, each vector in the decomposition must be of the form $\xi_j = \psi \otimes \eta_j$ for some vectors $\eta_j$. Hence
\[
        \operatorname{SR}(\xi_j) = \operatorname{SR}(\psi)\operatorname{SR}(\eta_j) = r \operatorname{SR}(\eta_j).
\]
Since we established $\operatorname{SR}(\xi_j) \leq s$, it follows that $\operatorname{SR}(\eta_j) \leq s/r$. Finally, evaluating the decomposition gives
\[
        \psi\psi^* \otimes Y = \sum_j (\psi\psi^* \otimes \eta_j \eta_j^*),
\]
which implies $Y = \sum_j \eta_j \eta_j^*$. Thus \(Y\) has a decomposition by vectors of Schmidt rank at most \(s/r\), meaning $\operatorname{SN}(Y) \leq s/r$.

Rearranging this yields $r \operatorname{SN}(Y) \leq s = \operatorname{SN}(\psi\psi^* \otimes Y)$. Combining this with the upper bound completes the proof.
\end{proof}

\begin{theorem}\label{thm-main4}
Let \(\sigma\) be a non-arithmetic Kubo--Ando mean. Let
\(r\geq1\), and suppose that
\[
        m,n\geq 2r.
\]
Then there exist operators
\[
        A,B\in \mathcal S_r^{m,n}
\]
such that
\[
        A\sigma B\notin \mathcal S_{2r-1}^{m,n}.
\]
In particular,
\[
        \operatorname{SN}(A\sigma B)\geq 2r.
\]
\end{theorem}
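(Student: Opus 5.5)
The plan is to tensor the two-qubit obstruction of Theorem~\ref{thm-main1} with a fixed maximally entangled pure state of Schmidt rank \(r\), and then read off the Schmidt number using Lemma~\ref{lem-schmidt-number}.

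\textbf{Step 1: the inputs.} Let \(\psi\in\mathbb C^r\otimes\mathbb C^r\) be the maximally entangled vector \(\psi=\sum_{i=1}^r e_i\otimes e_i\), so \(\operatorname{SR}(\psi)=r\). Let \(A_0,B_0\in\mathcal S_1^{2,2}\) be the positive definite operators of Theorem~\ref{thm-main1}, for which \(A_0\sigma B_0\notin\mathcal S_1^{2,2}\); equivalently \(\operatorname{SN}(A_0\sigma B_0)=2\). Set
\[
A'=\psi\psi^*\otimes A_0,\qquad B'=\psi\psi^*\otimes B_0,
\]
regarded as operators on \((\mathbb C^r\otimes\mathbb C^2)\otimes(\mathbb C^r\otimes\mathbb C^2)\cong\mathbb C^{2r}\otimes\mathbb C^{2r}\) under the regrouping used in Lemma~\ref{lem-schmidt-number}. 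By that lemma, \(\operatorname{SN}(A')=r\cdot 1=r\), and likewise for \(B'\), so \(A',B'\in\mathcal S_r^{2r,2r}\).

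\textbf{Step 2: compute the mean.} The operators \(A'\) and \(B'\) are not invertible, so \(\sigma\) has to be evaluated via the standard limit \(X\sigma Y=\lim_{\delta\downarrow0}(X+\delta I)\sigma(Y+\delta I)\). Working on the range projection \(P=\psi\psi^*/\|\psi\|^2\), one has \(A'=P\otimes \|\psi\|^2A_0\), and similarly for \(B'\). Kubo--Ando means satisfy
\[
(P\otimes X)\sigma(P\otimes Y)=P\otimes(X\sigma Y)
\]
for commuting direct-sum decompositions. This follows from the direct-sum property together with \(0\sigma0=0\) on the complement, or alternatively from congruence invariance under the isometry \(\hat\psi\otimes I\). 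It also uses positive homogeneity, \((cX)\sigma(cY)=c(X\sigma Y)\). Together these give
\[
A'\sigma B'=\psi\psi^*\otimes(A_0\sigma B_0).
\]
Lemma~\ref{lem-schmidt-number} then yields \(\operatorname{SN}(A'\sigma B')=r\cdot2=2r\), so \(A'\sigma B'\notin\mathcal S_{2r-1}^{2r,2r}\).

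\textbf{Step 3: embed into \(\mathbb C^m\otimes\mathbb C^n\).} Pass to \(m,n\ge2r\) using the local isometries \(V:\mathbb C^{2r}\to\mathbb C^m\) and \(W:\mathbb C^{2r}\to\mathbb C^n\), exactly as in the proof of Theorem~\ref{thm-main2}. Local isometries map vectors of Schmidt rank \(k\) to vectors of Schmidt rank \(k\), and \(V^*\otimes W^*\) does not increase Schmidt rank. Hence \(\widetilde A,\widetilde B\in\mathcal S_r^{m,n}\). Congruence invariance gives \(\widetilde A\sigma\widetilde B=(V\otimes W)(A'\sigma B')(V\otimes W)^*\). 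If this operator lay in \(\mathcal S_{2r-1}^{m,n}\), then compressing back by \(V^*\otimes W^*\) would place \(A'\sigma B'\) in \(\mathcal S_{2r-1}^{2r,2r}\), a contradiction.

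\textbf{Main obstacle.} The delicate point is Step 2: justifying the Kubo--Ando mean of singular operators supported on \(\psi\psi^*\otimes(\cdot)\). I would handle it cleanly by congruence with the isometry \(J=\hat\psi\otimes I_4\), where \(\hat\psi=\psi/\|\psi\|\). Since \(A'=J(\|\psi\|^2A_0)J^*\), and congruence invariance holds for isometries when \(\sigma\) is defined through the monotone limit, the identity follows.

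A secondary check is the multiplicativity of Schmidt rank under the regrouping, which is already invoked in Lemma~\ref{lem-schmidt-number}. If the singular case causes trouble, one can alternatively replace \(\psi\psi^*\) by an invertible perturbation. However, that would break the exact tensor-product structure, so I prefer the congruence route.
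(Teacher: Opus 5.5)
Your proposal is correct and follows the paper's proof essentially step for step. You tensor the two-qubit counterexample $A_0,B_0$ from Theorem~\ref{thm-main1} with $\psi_r\psi_r^*$, use Lemma~\ref{lem-schmidt-number} both for the bound $\operatorname{SN}\le r$ on the inputs and for the value $2r$ on the mean, and then embed by local isometries as in Theorem~\ref{thm-main2}. Your derivation of $(\psi\psi^*\otimes A_0)\sigma(\psi\psi^*\otimes B_0)=\psi\psi^*\otimes(A_0\sigma B_0)$, via congruence with the isometry $\hat\psi\otimes I_4$ together with homogeneity, is a careful spelling-out of what the paper covers in one line as the ``tensorial congruence property''.
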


\begin{proof}
Since \(\sigma\) is not arithmetic, Theorem~\ref{thm-main1}
gives two operators $A_0,B_0\in\mathcal S_1^{2,2}$ such that $ A_0\sigma B_0\notin\mathcal S_1^{2,2}$.
In dimension \(2\otimes2\), this means $\operatorname{SN}(A_0\sigma B_0)=2$. Let $\{e_1,\dots,e_r\}$ be the standard orthonormal basis of $\mathbb C^r$ and put
\[
        \psi_r=\sum_{j=1}^r e_j\otimes e_j
        \in\mathbb C^r\otimes\mathbb C^r .
\]
Then $\operatorname{SR}(\psi_r)=r$. We set $P_r=\psi_r\psi_r^*$  and then consider the two operators $        P_r\otimes A_0$ and $P_r\otimes B_0$ on
\[
        (\mathbb C^r\otimes\mathbb C^2)
        \otimes
        (\mathbb C^r\otimes\mathbb C^2).
\]
 By applying the canonical unitary isomorphism that exchanges the second and third tensor factors, we naturally identify this space with
\[
        (\mathbb C^r\otimes\mathbb C^2)
        \otimes
        (\mathbb C^r\otimes\mathbb C^2).
\]
Under this canonical identification, we regard
\[
        P_r\otimes A_0,
        \qquad
        P_r\otimes B_0
\]
as bipartite operators on this rearranged space. Under the natural identifications $\mathbb C^r\otimes\mathbb C^2\cong\mathbb C^{2r}$, these become operators on
\[
        \mathbb C^{2r}\otimes\mathbb C^{2r}.
\]
Since \(A_0\) and \(B_0\) are separable,
\[
        \operatorname{SN}(A_0)\leq1,
        \qquad
        \operatorname{SN}(B_0)\leq1.
\]
Applying  Lemma~\ref{lem-schmidt-number}, we obtain
\[
        \operatorname{SN}(P_r\otimes A_0)=\operatorname{SN}(\psi\psi^*\otimes A_0)
        =
        \operatorname{SR}(\psi)\operatorname{SN}(A_0)\leq r.
\]
Similarly, $\operatorname{SN}(P_r\otimes B_0)\leq r$ and so
 \[
        P_r\otimes A_0,\;
        P_r\otimes B_0
        \in
        \mathcal S_r^{2r,2r}.
\]
Noting that Kubo--Ando means have a tensorial congruence property, we have
\[
        (P_r\otimes A_0)\sigma(P_r\otimes B_0)
        =
        P_r\otimes(A_0\sigma B_0).
\]
Applying the same standard tensoring property again gives
\[
\begin{aligned}
        \operatorname{SN}
        \bigl(
        (P_r\otimes A_0)\sigma(P_r\otimes B_0)
        \bigr)
        &=
        \operatorname{SN}\bigl(P_r\otimes(A_0\sigma B_0)\bigr)
        \\
        &=
        r\,\operatorname{SN}(A_0\sigma B_0)
        \\
        &=2r .
\end{aligned}
\]
Thus
\[
        (P_r\otimes A_0)\sigma(P_r\otimes B_0)
        \notin
        \mathcal S_{2r-1}^{2r,2r}.
\]

Finally, if \(m,n\geq2r\), we use the embedding
\[
        \mathbb C^{2r}\otimes\mathbb C^{2r}
        \subseteq
        \mathbb C^m\otimes\mathbb C^n
\]
by local isometries, exactly as in the proof of
Theorem~\ref{thm-main2}. Local
isometries preserve Schmidt number and preserve Kubo--Ando means by
congruence invariance. Therefore the same construction gives
\[
        A,B\in\mathcal S_r^{m,n}
\]
with
\[
        A\sigma B\notin\mathcal S_{2r-1}^{m,n}.
\]
\end{proof}

\section{Applications to Choi cones of quantum channels}

The preceding results can be reformulated naturally in the language of
quantum maps.  Under the Choi--Jamiołkowski correspondence, bipartite
positive semidefinite matrices are precisely the Choi matrices of
completely positive maps, while the separable, PPT, and Schmidt-number
cones become the Choi cones of entanglement-breaking, PPT, and
\(k\)-superpositive maps, respectively.  We now translate the preceding cone-rigidity results into statements
about completely positive maps via the Choi--Jamio{\l}kowski
correspondence.: Kubo--Ando means are
compatible with complete positivity at the Choi level, but their
non-arithmetic curvature is incompatible with the finer cone structures
arising from entanglement and positive-map constraints.  We now formulate
this consequence for completely positive maps and quantum channels.

Let \(\sigma\) be a Kubo--Ando mean.  If
\(\Phi,\Psi:\M_m\to \M_n\) are completely positive maps, we   consider  the $\sigma$-mean of two completely positive maps $\Phi$ and $\Psi$ as the unique map whose Choi matrix satisfies $C_{\Phi\sigma\Psi}
        =
        C_\Phi \sigma C_\Psi$. Recently, some  framework for extending Kubo–Ando means to completely positive maps via Arveson's Radon–Nikodym derivatives was also introduced \cite{kian2026,okayasu2026}, see also    \cite{FMVW}.
Since \( C_\Phi \sigma C_\Psi\geq0\), the map
\(\Phi\sigma\Psi\) is again completely positive.

If \(\Phi\) and \(\Psi\) are quantum channels, however, the map
\(\Phi\sigma\Psi\) need not be trace preserving, since in
general
\[
        \operatorname{Tr}_n\bigl( C_\Phi \sigma C_\Psi\bigr)
        \neq I_m .
\]
Thus, in order to obtain a binary operation on channels, we introduce a
normalization of Choi matrices.

Let \(C\in \M_m\otimes \M_n\) be positive definite.  Define
\[
        \mathcal N(C)
        =
        \bigl(
        (\operatorname{Tr}_n C)^{-1/2}\otimes I_n
        \bigr)
        C
        \bigl(
        (\operatorname{Tr}_n C)^{-1/2}\otimes I_n
        \bigr).
\]
Then
\[
        \operatorname{Tr}_n\mathcal N(C)=I_m.
\]
Accordingly, for full-rank quantum channels \(\Phi,\Psi:\M_m\to \M_n\), we
define their normalized Choi-level \(\sigma\)-mean by
\[
        J(\Phi\widehat{\sigma}\Psi)
        =
        \mathcal N\bigl(J(\Phi)\sigma J(\Psi)\bigr).
\]

The operation \(\widehat{\sigma}\) is therefore a Choi-level nonlinear
mean of quantum channels.  We note that this should not be confused with ordinary
composition of maps; because  it is obtained by applying the Kubo--Ando
mean to the corresponding Choi matrices and then restoring trace
preservation by a local normalization.

\begin{lemma}\label{lem-choi-normal}
Let \(C\in \M_m\otimes \M_n\) be positive definite. Then
\[
        \operatorname{Tr}_n\mathcal N(C)=I_m.
\]
Moreover, for every \(1\leq k\leq \min\{m,n\}\),
\[
        C\in\mathcal S_k^{m,n}
        \quad\Longleftrightarrow\quad
        \mathcal N(C)\in\mathcal S_k^{m,n}.
\]
Also,
\[
        C\in\mathcal {PPT}^{m,n}
        \quad\Longleftrightarrow\quad
        \mathcal N(C)\in\mathcal {PPT}^{m,n}.
\]
\end{lemma}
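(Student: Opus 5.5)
The plan is to write $\mathcal N(C)=(D\otimes I_n)\,C\,(D\otimes I_n)$ with $D=(\operatorname{Tr}_n C)^{-1/2}$, and to treat all three claims as consequences of the fact that this is a congruence by an invertible \emph{local} operator acting on the first factor only.

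\textbf{Step 1: the trace identity.} First note that $\operatorname{Tr}_n C$ is positive definite when $C>0$. Indeed, for $x\neq0$ we have $\langle (\operatorname{Tr}_n C)x,x\rangle=\sum_k\langle C(x\otimes e_k),x\otimes e_k\rangle>0$. Hence $D$ is well defined and invertible. Next I use the module property of the partial trace,
\[
\operatorname{Tr}_n\bigl((X\otimes I_n)\,Y\,(Z\otimes I_n)\bigr)=X\,(\operatorname{Tr}_n Y)\,Z,
\]
which can be checked on elementary tensors $Y=Y_1\otimes Y_2$ and extended by linearity. Applying it gives $\operatorname{Tr}_n\mathcal N(C)=D(\operatorname{Tr}_n C)D=I_m$.

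\textbf{Step 2: the Schmidt-number cones.} Let $G=D\otimes I_n$, which is invertible. If $C=\sum_j z_jz_j^*$ with $\operatorname{SR}(z_j)\le k$, then $\mathcal N(C)=\sum_j (Gz_j)(Gz_j)^*$. Writing $z_j=\sum_{i\le k}x_i\otimes y_i$ gives $Gz_j=\sum_i Dx_i\otimes y_i$, so $\operatorname{SR}(Gz_j)\le k$. This shows $\mathcal N(C)\in\mathcal S_k^{m,n}$. For the converse, $C=G^{-1}\mathcal N(C)G^{-1}$ with $G^{-1}=D^{-1}\otimes I_n$, which is again a local invertible congruence, so the same argument applies.

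\textbf{Step 3: the PPT cone.} Positivity is preserved in both directions because the map is a congruence by an invertible matrix. For the partial transpose, I check on elementary tensors $X\otimes Y$ that
\[
\bigl((D\otimes I)(X\otimes Y)(D\otimes I)\bigr)^\Gamma=DXD\otimes Y^T=(D\otimes I)(X\otimes Y)^\Gamma(D\otimes I).
\]
By linearity this gives $\mathcal N(C)^\Gamma=G\,C^\Gamma\,G$. Hence $C^\Gamma\ge0$ if and only if $\mathcal N(C)^\Gamma\ge0$, using $G^{-1}$ for the reverse implication.

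\textbf{Main point to get right.} The step requiring the most care is the commutation of the partial transpose with the congruence in Step 3. It holds precisely because the normalizing factor acts only on the untransposed first factor, which is where the transpose map $T_n$ does not act. Everything else reduces to routine linearity checks on elementary tensors.
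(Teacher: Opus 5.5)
Your proposal is correct and follows the paper's proof essentially step for step. The paper also writes $\mathcal N(C)=(R\otimes I_n)C(R\otimes I_n)$ with $R=(\operatorname{Tr}_n C)^{-1/2}$, obtains the trace identity from $R(\operatorname{Tr}_n C)R=I_m$, transports Schmidt-rank decompositions through $R\otimes I_n$ and $R^{-1}\otimes I_n$, and uses that the partial transpose on the second factor commutes with the local congruence on the first; your explicit checks of the positive definiteness of $\operatorname{Tr}_n C$ and the module property of the partial trace spell out what the paper uses implicitly.
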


\begin{proof}
Set
\[
        R=(\operatorname{Tr}_n C)^{-1/2}.
\]
Since \(C>0\), we have \(\operatorname{Tr}_n C>0\), and hence \(R\) is
invertible.  By definition,
\[
        \mathcal N(C)=(R\otimes I_n)C(R\otimes I_n).
\]
Taking the partial trace over the second tensor factor gives
\[
        \operatorname{Tr}_n\mathcal N(C)
        =
        R(\operatorname{Tr}_n C)R
        =
        I_m.
\]
We next prove the assertion about Schmidt number. If $\xi \in \mathbb C^m\otimes \mathbb C^n$ has   Schmidt rank at most \(k\), then \(\xi\) admits a decomposition
\[
    \xi = \sum_{j=1}^k x_j\otimes y_j
\]
for some vectors \(x_j\in\mathbb C^m\) and
\(y_j\in\mathbb C^n\). Applying \(R\otimes I_n\) gives
\[
    (R\otimes I_n)\xi
    =
    \sum_{j=1}^k (Rx_j)\otimes y_j,
\]
which is again a sum of at most \(k\) elementary tensors.  Hence
\[
    \operatorname{SR}\big((R\otimes I_n)\xi\big)\le k.
\]
Since \(R\) is invertible, the same argument applied to
\(R^{-1}\otimes I_n\) shows the converse implication, so
\(R\otimes I_n\) preserves Schmidt rank exactly.

Now let \(C\in \mathcal S_k^{m,n}\).  By definition,
\[
    C = \sum_\ell \xi_\ell\xi_\ell^*
\]
where each \(\xi_\ell\) has Schmidt rank at most \(k\).  Therefore
\[
    (R\otimes I_n)C(R\otimes I_n)
    =
    \sum_\ell
    \big((R\otimes I_n)\xi_\ell\big)
    \big((R\otimes I_n)\xi_\ell\big)^*.
\]
Since each vector \((R\otimes I_n)\xi_\ell\) also has Schmidt rank at
most \(k\), it follows that
\[
    (R\otimes I_n)C(R\otimes I_n)\in \mathcal S_k^{m,n}.
\]
Applying the same argument to \(R^{-1}\otimes I_n\) yields the reverse
inclusion, and hence local invertible congruences preserve the cone
\(\mathcal S_k^{m,n}\).

Finally, we prove the PPT assertion.  Since the partial transpose acts on
the second tensor factor, while \(R\) acts on the first tensor factor, we
have
\[
\begin{aligned}
        (\operatorname{id}_m\otimes T)(\mathcal N(C))
        &=
        (\operatorname{id}_m\otimes T)
        \bigl((R\otimes I_n)C(R\otimes I_n)\bigr)  \\
        &=
        (R\otimes I_n)
        (\operatorname{id}_m\otimes T)(C)
        (R\otimes I_n).
\end{aligned}
\]
Again this is an invertible congruence.  Hence
\[
        (\operatorname{id}_m\otimes T)(C)\geq0
\]
if and only if
\[
        (\operatorname{id}_m\otimes T)(\mathcal N(C))\geq0.
\]
Thus
\[
        C\in\mathcal {PPT}^{m,n}
        \quad\Longleftrightarrow\quad
        \mathcal N(C)\in\mathcal {PPT}^{m,n}.
\]
\end{proof}

We recall the corresponding terminology for completely positive maps.
A completely positive map \(\Phi:\M_m\to \M_n\) is called
\begin{enumerate}
  \item entanglement-breaking if $C_\Phi\in\mathcal S_1^{m,n}\);
  \item positive partial transpose (PPT)  if \(C_\Phi\in\mathcal {PPT}^{m,n}\);
  \item \(k\)-superpositive if \(C_\Phi\in\mathcal S_k^{m,n}\).
\end{enumerate}

  More generally, if
\(\mathcal C\subseteq \M_m\otimes \M_n\) is a cone, we write
\[
        \operatorname{CP}_{\mathcal C}(m,n)
        =
        \{\Phi:\M_m\to \M_n\text{ completely positive}:
        C_\Phi\in\mathcal C\}.
\]
Thus \(\operatorname{CP}_{\mathcal S_1}(m,n)\) is the cone of
entanglement-breaking completely positive maps, while
\(\operatorname{CP}_{\mathcal S_k}(m,n)\) is the cone of
\(k\)-superpositive maps.

\begin{corollary} \label{thm-cone}
Let \(m,n\geq2\), and let \(\sigma\) be a Kubo--Ando mean.  Let
\(\mathcal C\subseteq \M_m\otimes \M_n\) be a convex cone satisfying
\[
        \mathcal S_1^{m,n}
        \subseteq
        \mathcal C
        \subseteq
        \mathcal{PPT}^{m,n}.
\]
Then the following are equivalent:
\begin{enumerate}
        \item \(\operatorname{CP}_{\mathcal C}(m,n)\) is closed under the Choi-level mean $\sigma$;
    \item \(\sigma\) is a weighted arithmetic mean.
    \end{enumerate}
In particular, a Kubo--Ando mean preserves entanglement-breaking
completely positive maps under \(\sigma\) if and only if it
is arithmetic.
\end{corollary}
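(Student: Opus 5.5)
The plan is to transport Theorem~\ref{thm-main3} through the Choi--Jamio{\l}kowski isomorphism. The Choi-level mean is defined by $C_{\Phi\sigma\Psi}=C_\Phi\sigma C_\Psi$, and $\Phi\mapsto C_\Phi$ is a linear bijection. Hence $\operatorname{CP}_{\mathcal C}(m,n)$ is closed under $\sigma$ exactly when, for all $X,Y\in\mathcal C$ (which are automatically positive semidefinite, since $\mathcal C\subseteq\mathcal{PPT}^{m,n}\subseteq\M_{mn}^+$), the matrix $X\sigma Y$ again lies in $\mathcal C$. This uses that every $X\in\mathcal C$ is the Choi matrix of some completely positive map $\Phi$, namely the one with $C_\Phi=X$. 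So closedness of $\operatorname{CP}_{\mathcal C}(m,n)$ under $\sigma$ is literally closedness of the cone $\mathcal C$ under $\sigma$.

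The two implications then go as follows.
\begin{enumerate}
\item \((2)\Rightarrow(1)\): if $\sigma$ is weighted arithmetic, then $C_\Phi\sigma C_\Psi=(1-t)C_\Phi+tC_\Psi\in\mathcal C$, because $\mathcal C$ is a convex cone.
\item \((1)\Rightarrow(2)\): argue by contraposition using Theorem~\ref{thm-main3}. If $\sigma$ is non-arithmetic, that theorem yields $\widetilde A,\widetilde B\in\mathcal S_1^{m,n}\subseteq\mathcal C$ with $\widetilde A\sigma\widetilde B\notin\mathcal{PPT}^{m,n}\supseteq\mathcal C$. Taking $\Phi,\Psi$ with $C_\Phi=\widetilde A$ and $C_\Psi=\widetilde B$ gives two maps in $\operatorname{CP}_{\mathcal C}(m,n)$ whose Choi-level mean lies outside it.
\end{enumerate}
The ``in particular'' statement is the special case $\mathcal C=\mathcal S_1^{m,n}$, since $\operatorname{CP}_{\mathcal S_1}(m,n)$ is the cone of entanglement-breaking maps.

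The only point needing care is well-definedness of $\sigma$ on the possibly singular matrices $\widetilde A,\widetilde B$. These are supported on a $4$-dimensional subspace, and the mean of such operators is defined by the usual Kubo--Ando limit $A\sigma B=\lim_{\delta\downarrow0}(A+\delta I)\sigma(B+\delta I)$; congruence invariance was already used in that setting in Theorem~\ref{thm-main2}. If one prefers positive definite inputs, for instance to later pass to the normalized channel mean $\widehat\sigma$, I would instead take $\widetilde A+\delta I$ and $\widetilde B+\delta I$. These remain in $\mathcal S_1^{m,n}$, since $I$ is separable. Because $\mathcal{PPT}^{m,n}$ is closed and the mean is continuous on positive definite pairs, the mean of the perturbed inputs converges to $\widetilde A\sigma\widetilde B$, whose partial transpose has a strictly negative eigenvalue. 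Hence the failure persists for small $\delta>0$.

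I expect this perturbation and continuity bookkeeping to be the main, albeit mild, obstacle. Lemma~\ref{lem-choi-normal} would then extend the conclusion to the normalized mean $\widehat\sigma$ on full-rank channels, because $\mathcal N$ preserves membership in $\mathcal S_k^{m,n}$ and in $\mathcal{PPT}^{m,n}$.
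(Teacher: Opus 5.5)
Your proposal is correct and follows the paper's proof essentially verbatim. Convexity of $\mathcal C$ gives $(2)\Rightarrow(1)$; for $(1)\Rightarrow(2)$ you take the separable pair from Theorem~\ref{thm-main3} whose mean leaves $\mathcal{PPT}^{m,n}$ and realize it as Choi matrices of maps in $\operatorname{CP}_{\mathcal C}(m,n)$. Your extra remarks on singular inputs and the $\delta I$-perturbation are not needed for the corollary but are harmless. (The convergence there comes from the downward continuity built into the Kubo--Ando definition, not from continuity on positive definite pairs.)
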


\begin{proof}
Suppose first that \(\sigma\) is a weighted arithmetic mean.  Then, for
some \(t\in[0,1]\),
\[
        A\sigma B=(1-t)A+tB.
\]
If \(\Phi,\Psi\in\operatorname{CP}_{\mathcal C}(m,n)\), then $C_\Phi,C_\Psi \in\mathcal C$.
Since \(\mathcal C\) is convex,
\[
        C_{\Phi\sigma\Psi}=(1-t)C_\Phi+tC_\Psi\in\mathcal C.
\]
Hence
\[
        \Phi\sigma\Psi
        \in
        \operatorname{CP}_{\mathcal C}(m,n).
\]
Conversely, suppose that \(\sigma\) is not arithmetic.  By
Theorem~\ref{thm-main3}, there exist $A,B\in\mathcal S_1^{m,n}$
such that
\[
        A\sigma B\notin\mathcal {PPT}^{m,n}.
\]
Let \(\Phi,\Psi:\M_m\to \M_n\) be the completely positive maps with Choi matrices $C_\Phi=A$ and $C_\Psi=B$.
Since \(A,B\in\mathcal S_1^{m,n}\subseteq\mathcal C\), we have
\[
        \Phi,\Psi\in\operatorname{CP}_{\mathcal C}(m,n).
\]
However,
\[
         C_{\Phi\sigma\Psi}=
                C_\Phi \sigma C_\Psi
        =
        A\sigma B
        \notin
        \mathcal {PPT}^{m,n}.
\]
Since
\[
        \mathcal C\subseteq\mathcal {PPT}^{m,n},
\]
it follows that
\[
        C_{\Phi\sigma\Psi} \notin\mathcal C.
\]
Therefore
\[
        \Phi\sigma\Psi
        \notin
        \operatorname{CP}_{\mathcal C}(m,n).
\]
Thus \(\operatorname{CP}_{\mathcal C}(m,n)\) is not closed under
\(\sigma\).  This proves the equivalence.
\end{proof}

Thus the cone-rigidity theorem has an immediate map-theoretic
interpretation: among Kubo--Ando means, the arithmetic means are exactly
those whose Choi-level action preserves the completely positive maps
associated with any cone between separability and PPT.  In particular,
non-arithmetic Kubo--Ando means may take two entanglement-breaking
completely positive maps to a map that is not even PPT.

\begin{theorem}\label{thm-ent-rig}
Let \(\sigma\) be a Kubo--Ando mean.  Then the following are equivalent:
\begin{enumerate}
\item The class of full-rank entanglement-breaking qubit channels is
    closed under the normalized Choi-level mean \(\widehat{\sigma}\).
    \item \(\sigma\) is a weighted arithmetic mean.
    \end{enumerate}
Equivalently, if \(\sigma\) is not arithmetic, then there exist full-rank
entanglement-breaking qubit channels \(\Phi,\Psi:\M_2\to \M_2\) such that
\[
        \Phi\widehat{\sigma}\Psi
\]
is not PPT, and hence is not entanglement breaking.
\end{theorem}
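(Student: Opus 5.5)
The arithmetic direction should be immediate; for the converse I would not reuse the matrices of Theorem~\ref{thm-main1} verbatim. Instead I would redesign them inside the same commuting family so that both inputs already have $\operatorname{Tr}_2=\tau I$ with the \emph{same} $\tau$, which reduces $\mathcal N$ to a scalar rescaling.

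\emph{Arithmetic direction.} Suppose $A\sigma B=(1-t)A+tB$ and $\Phi,\Psi$ are full-rank entanglement-breaking qubit channels. Then $C=(1-t)C_\Phi+tC_\Psi$ is positive definite, separable by convexity of $\mathcal S_1^{2,2}$, and satisfies $\operatorname{Tr}_2 C=I_2$. Since $R=(\operatorname{Tr}_2C)^{-1/2}=I_2$, we get $\mathcal N(C)=C$. Hence $\Phi\widehat\sigma\Psi$ is again a full-rank entanglement-breaking channel.

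\emph{Converse: the family.} Assume $\sigma$ is not arithmetic, and work in the family $X(\alpha,\beta,\gamma)=\alpha(uu^*+vv^*)+\beta ss^*+\gamma aa^*$ from the proof of Theorem~\ref{thm-main1}. The partial traces are $\operatorname{Tr}_2(uu^*)=E_{11}$, $\operatorname{Tr}_2(vv^*)=E_{22}$ and $\operatorname{Tr}_2(ss^*)=\operatorname{Tr}_2(aa^*)=I/2$. Hence $\operatorname{Tr}_2X=\bigl(\alpha+\tfrac{\beta+\gamma}{2}\bigr)I$. The partial transpose has eigenvalues $\tfrac{\beta+\gamma}{2}$ (twice), $\alpha-\tfrac{\beta-\gamma}{2}$, and $\lambda_3=\alpha+\tfrac{\beta}{2}-\tfrac{\gamma}{2}$.

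\emph{Converse: the inputs.} Fix $\gamma>0$ and choose two distinct pairs $(\alpha_i,\beta_i)$, $i=1,2$, with positive entries, $\alpha_i+\beta_i/2=\gamma/2$, and $\alpha_2/\alpha_1\neq\beta_2/\beta_1$. For instance take $\gamma=2$, $\alpha_1=\tfrac12+\varepsilon$, $\beta_1=1-2\varepsilon$, $\alpha_2=\tfrac12-\varepsilon$, $\beta_2=1+2\varepsilon$. Then $X_i=X(\alpha_i,\beta_i,\gamma)$ are positive definite, their partial transposes have eigenvalues $\ge0$ (with $\lambda_3=0$), and they are separable by Peres--Horodecki. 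Both have $\operatorname{Tr}_2X_i=\gamma I$. So $C_\Phi=X_1/\gamma$ and $C_\Psi=X_2/\gamma$ define full-rank entanglement-breaking qubit channels.

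\emph{Converse: the mean.} Because $X_1,X_2$ are diagonal in $\{u,v,s,a\}$, $C_\Phi\sigma C_\Psi$ is again of the form $X(\alpha',\beta',\gamma')$. Its partial trace is therefore a scalar multiple of $I$, so $\mathcal N$ merely rescales it by a positive constant. By Lemma~\ref{lem-choi-normal} (or directly), PPT status is unchanged. It thus suffices to show that $\lambda_3(X_1\sigma X_2)<0$.

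\emph{Key inequality.} With $g(x,y)=xf(y/x)$, the perspective of $f$, we have
\[
\lambda_3(X_1\sigma X_2)=g(\alpha_1,\alpha_2)+g(\beta_1/2,\beta_2/2)-\tfrac{\gamma}{2}g(1,1),
\]
and $g(1,1)=f(1)=1$. Since $f$ is concave, $g$ is concave and positively homogeneous, hence superadditive. Therefore
\[
g(\alpha_1,\alpha_2)+g(\beta_1/2,\beta_2/2)\le g\bigl(\alpha_1+\tfrac{\beta_1}{2},\alpha_2+\tfrac{\beta_2}{2}\bigr)=g(\gamma/2,\gamma/2)=\gamma/2,
\]
which gives $\lambda_3\le 0$.

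\emph{Main obstacle: strictness.} We need $\lambda_3<0$, not just $\le 0$. Superadditivity of $g$ is strict when the two arguments are non-proportional and $f$ is strictly concave. Equality in $g(p)+g(q)\le g(p+q)$ means equality in the two-point Jensen inequality for $f$ at the distinct points $\alpha_2/\alpha_1$ and $\beta_2/\beta_1$. For a non-affine $f$ the integral representation gives $f''<0$ everywhere on $(0,\infty)$, not only at $1$. So $f$ is strictly concave and the inequality is strict.

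Thus $C_\Phi\sigma C_\Psi$ and its normalization fail to be PPT, and therefore fail to be entanglement breaking. This gives the non-PPT statement and completes the equivalence. As a cross-check, a Taylor expansion in $\varepsilon$ as in Theorem~\ref{thm-main1} should give $\lambda_3\sim -c\,\kappa_\sigma\varepsilon^2$ with $c>0$.
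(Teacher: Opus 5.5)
Your argument is correct, but it takes a different route from the paper's in two ways.

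\textbf{Choice of inputs.} The paper reuses the interior matrices $A_{\varepsilon,c},B_{\varepsilon,c}$ of Theorem~\ref{thm-main1}. Their partial traces are $\bigl(\tfrac{3+\varepsilon}{2}+2c\varepsilon^2\bigr)I_2$ and $\bigl(\tfrac{3-\varepsilon}{2}+2c\varepsilon^2\bigr)I_2$. The paper divides each by its own constant and then simply asserts that $\widetilde A\sigma\widetilde B$ is still not PPT. Because the two scalars differ, homogeneity of $\sigma$ does not carry over the computation of Theorem~\ref{thm-main1}, and the claim needs a fresh expansion. The claim does survive: one gets $\lambda_3=\tfrac23\bigl(c-\tfrac43\kappa_\sigma\bigr)\varepsilon^2+O(\varepsilon^3)$ for the relevant eigenvalue of $(\widetilde A\sigma\widetilde B)^\Gamma$. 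The paper, however, does not check this.

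Your choice of a common constant $\tau=\gamma$ makes normalization commute with $\sigma$ and turns $\mathcal N$ into a harmless positive scalar, so this issue disappears. Note also that $\lambda_3(X)=0$ forces $\operatorname{Tr}_2X=\gamma I$. Hence the paper's boundary matrices $A_\varepsilon,B_\varepsilon$ normalize into your family with $\gamma=1$, and your argument covers them as well.

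\textbf{Key estimate.} You replace the local second-order Taylor expansion at $1$ by superadditivity of the perspective $g(x,y)=xf(y/x)$, combined with strict concavity of $f$ on all of $(0,\infty)$. This gives $\lambda_3<0$ for every $\varepsilon\in(0,\tfrac12)$, not just for small $\varepsilon$, with no remainder estimates. It also isolates exactly what non-arithmeticity provides, namely strict concavity.

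The paper's expansion instead yields the quantitative rate $\kappa_\sigma\varepsilon^2$ and strictly PPT (interior) inputs. You do not need interior inputs here: the theorem only asks for full-rank channels, that is, positive definite separable Choi matrices, and yours are.
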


\begin{proof}
First suppose  that \(\sigma\) is a weighted arithmetic mean.  Then, $A\sigma B=(1-t)A+tB$ for
some \(t\in[0,1]\). Let \(\Phi,\Psi:\M_2\to \M_2\) be entanglement-breaking channels.  Then
 $C_\Phi,C_\Psi\in \mathcal S_1^{2,2}$
and
\[
        \operatorname{Tr}_2 C_\Phi
        =
        \operatorname{Tr}_2 C_\Psi
        =
        I_2 .
\]
Hence
\[
        C_\Phi\sigma C_\Psi
        =
        (1-t)C_\Phi+t C_\Psi
\]
is again separable and has partial trace \(I_2\).  Thus the normalization
is unnecessary, and
\[
        C_{\Phi\widehat{\sigma}\Psi}
        =
        (1-t)C_\Phi+tC_\Psi
        \in \mathcal S_1^{2,2}.
\]
Therefore \(\Phi\widehat{\sigma}\Psi\) is entanglement breaking.

Conversely, suppose that \(\sigma\) is not arithmetic. By the two-qubit construction in Theorem~\ref{thm-main1}, there exist
full-rank separable matrices
\[
    A,B\in \operatorname{int}\mathcal S_1^{2,2}
\]
which, in addition, we have
\[
    \operatorname{Tr}_2 A=\alpha I_2,
    \qquad
    \operatorname{Tr}_2 B=\beta I_2
\]
By setting
\[
        \widetilde A=\alpha^{-1}A,
        \qquad
        \widetilde B=\beta^{-1}B,
\]
we  still have
\[
        \widetilde A\sigma \widetilde B
        \notin
        \mathcal {PPT}^{2,2}.
\]
Moreover,
\[
        \operatorname{Tr}_2\widetilde A
        =
        \operatorname{Tr}_2\widetilde B
        =
        I_2 .
\]
Thus \(\widetilde A\) and \(\widetilde B\) are Choi matrices of full-rank
entanglement-breaking qubit channels, say
\[
        C_\Phi=\widetilde A,
        \qquad
      C_\Psi=\widetilde B .
\]
Then
\[
        C_\Phi\sigma C_\Psi
        =
        \widetilde A\sigma\widetilde B
        \notin
        \mathcal {PPT}^{2,2}.
\]
By Lemma~\ref{lem-choi-normal}, Choi
normalization preserves PPT membership.  Therefore
\[
        C_{\Phi\widehat{\sigma}\Psi}
        =
        \mathcal N\bigl(C_\Phi\sigma C_\Psi\bigr)
        \notin
        \mathcal {PPT}^{2,2}.
\]
Hence \(\Phi\widehat{\sigma}\Psi\) is not PPT and, in particular, is not
entanglement breaking.
\end{proof}

This gives a channel-theoretic interpretation of the cone rigidity
phenomenon.  Kubo--Ando means always preserve complete positivity at the
level of Choi matrices, but the preceding theorem shows that this is the
end of the general preservation principle: except for arithmetic means,
the normalized Choi-level operation fails to preserve even
entanglement-breaking qubit channels.  Thus convex mixing is the only
Kubo--Ando averaging procedure compatible with entanglement-breaking
structure.

\medskip

\noindent \textit{Conflict of Interest Statement.}  There is no conflict of interest.
\medskip

\noindent \textit{Ethical Statement.}  Not applicable. This research did not involve human participants, personal data, or animals.

\medskip
\noindent \textit{Informed Consent.} Not applicable.

\medskip
\noindent \textit{Funding Statement.} No funding was received for conducting this study.

\medskip
\noindent\textit{Data Availability Statement.} Data sharing not applicable to this article as no datasets were generated or analyzed during the current study.


\medskip
\begin{thebibliography}{99}


\bibitem{Choi1975}
M.-D. Choi,
\textit{Completely positive linear maps on complex matrices},
Linear Algebra Appl. \textbf{10} (1975), 285--290.
\bibitem{FMVW} P. E. Frenkel, M. Mosonyi, P. Vrana, and M. Weiner, \textit{Error bounds for composite quantum hypothesis testing and a new
characterization of the weighted Kubo--Ando geometric means}, (2025), arXiv:2503.13379.

\bibitem{FMPS} T. Furuta, J. Mi\'{c}i\'{c} Hot, J. Pe\v{c}ari\'{c} and Y. Seo, {\it Mond-Pe\v{c}ari\'{c} Method in Operator Inequalities}, Monographs in Inequalities 1, Element, Zagreb, 2005.

\bibitem{Horodecki1996}
M. Horodecki, P. Horodecki, and R. Horodecki,
\textit{Separability of mixed states: necessary and sufficient conditions},
Phys. Lett. A \textbf{223} (1996), 1--8.

\bibitem{HorodeckiShorRuskai2003}
M. Horodecki, P. W. Shor, and M. B. Ruskai,
\textit{Entanglement breaking channels},
Rev. Math. Phys. \textbf{15} (2003), 629--641.




\bibitem{Jamiolkowski1972}
A. Jamio{\l}kowski,
\textit{Linear transformations which preserve trace and positive semidefiniteness of operators},
Rep. Math. Phys. \textbf{3} (1972), 275--278.

\bibitem{kian-rc2025} M. Kian,  \textit{ A hierarchy of entanglement cones via rank-constrained $C^*$-convex hulls}, (2025) arXiv:2512.05560.

  \bibitem{kian2026}  M. Kian,  \textit{ Relative Kubo-Ando means of completely positive maps},    (2026) arXiv:2605.11701.

 \bibitem{KMS}   M. Kian   M. S. Moslehian,  and Y.  Seo,    \textit{ Variants of Ando–Hiai type inequalities for deformed means and applications},  Glasg.  Math. J.   \textbf{63} (3) (2021), 622--639.


\bibitem{KuboAndo1980}
F. Kubo and T. Ando,
\textit{Means of positive linear operators},
Math. Ann. \textbf{246} (1980), 205--224.

\bibitem{okayasu2026} R. Okayasu,  \textit{Geometric means and Lebesgue-type decomposition of completely positive maps}, (2026)  arXiv:2605.06019v1.

\bibitem{Peres1996}
A. Peres, \textit{Separability criterion for density matrices},
Phys. Rev. Lett. \textbf{77} (1996), 1413--1415.


\bibitem{TerhalHorodecki2000}
B. M. Terhal and P. Horodecki,
\textit{Schmidt number for density matrices},
Phys. Rev. A \textbf{61} (2000), 040301.


\end{thebibliography}
\end{document}